\newtheorem{thm}{Theorem}
\newtheorem{lem}[thm]{Lemma}
\newtheorem{defn}[thm]{Definition}
\theoremstyle{plain}
\pgfplotsset{compat=newest}
\pgfplotsset{plot coordinates/math parser=false}
\newlength\figureheight
\newlength\figurewidth
\newlength\defcolwidth
\newcommand*{\tran}{^{\mkern-1.5mu\mathsf{T}}}
\definecolor{myorange}{cmyk}{0,0.35,0.85,0} 
\definecolor{mypurple}{cmyk}{0.5,1,0,0} 
\definecolor{matblue1}{rgb}{0,0.4470,0.7410}
\definecolor{matred1}{rgb}{0.85,0.325,0.098}
\definecolor{matyel1}{rgb}{0.9290, 0.6940, 0.1250}
\definecolor{matpur1}{rgb}{0.4940, 0.1840, 0.5560}
\definecolor{matgre1}{rgb}{0.4660, 0.6740, 0.1880}
\definecolor{matblue2}{rgb}{0.3010, 0.7450, 0.9330}
\definecolor{matred2}{rgb}{0.6350, 0.0780, 0.1840}
\definecolor{matgrey1}{rgb}{0.5, 0.6, 0.7}
\definecolor{matpink1}{rgb}{1, 0.07, 0.65}
\definecolor{matblue3}{rgb}{0.07, 0.62, 1}
\newcommand{\yeldots}{\raisebox{2pt}{\tikz{\draw[-,matyel1,densely dotted,line width = 0.9pt](0,0) -- (3mm,0);}}}
\newcommand{\reddash}{\raisebox{2pt}{\tikz{\draw[-,matred1,dashed,line width = 0.9pt](0,0) -- (3mm,0);}}}
\newcommand{\blueline}{\raisebox{2pt}{\tikz{\draw[-,matblue1,solid,line width = 0.9pt](0,0) -- (3mm,0);}}}
\newcommand{\redline}{\raisebox{2pt}{\tikz{\draw[-,matred1,solid,line width = 0.9pt](0,0) -- (3mm,0);}}}
\newcommand{\yelline}{\raisebox{2pt}{\tikz{\draw[-,matyel1,solid,line width = 0.9pt](0,0) -- (3mm,0);}}}
\title{\LARGE \bf
Cross-Coupled Iterative Learning Control for Complex Systems: \\ A Monotonically Convergent and Computationally Efficient Approach*
}%
\author{Leontine Aarnoudse$^{1}$, Johan Kon$^{1}$, Koen Classens$^{1}$, Max van Meer$^{1}$, \\ Maurice Poot$^{1}$, Paul Tacx$^{1}$, Nard Strijbosch$^{2}$ and Tom Oomen$^{1,3}$
	\thanks{*This work is part of the research programme VIDI with project number 15698, which is (partly) financed by the NWO.}
	\thanks{$^{1}$The authors are with the Dept. of Mechanical Engineering, Control Systems Technology, Eindhoven University of Technology, Eindhoven, The Netherlands. {\tt\small l.i.m.aarnoudse@tue.nl}}%
	\thanks{$^{2}$Nard Strijbosch is with IBS Precision Engineering, Eindhoven, The Netherlands}%
	\thanks{$^{3}$Tom Oomen is also with the Delft Center for Systems and Control, Delft University of Technology, Delft, The Netherlands.
		}%
}%
\begin{document}
	\AddToShipoutPictureBG*{%
	\AtPageUpperLeft{%
		\setlength\unitlength{1in}%
		\hspace*{\dimexpr0.5\paperwidth\relax}
		\makebox(0,-1)[c]{
			\parbox{\paperwidth}{ \centering
				Leontine Aarnoudse, Johan Kon, Koen Classens, Max van Meer, Maurice Poot, Paul Tacx, Nard Strijbosch and Tom Oomen, \\ Cross-Coupled Iterative Learning Control for Complex Systems: A Monotonically Convergent and Computationally Efficient Approach, \\
				To appear in {\em Conference on Decision and Control 2022}, Cancún, Mexico, 2022}}%
}}

\maketitle%
\thispagestyle{empty}%
\pagestyle{empty}%

\setlength\defcolwidth{7.85cm}

\setlength\figurewidth{.9\defcolwidth}
\setlength\figureheight{.7\figurewidth}

\begin{abstract}
Cross-coupled iterative learning control (ILC) can achieve high performance for manufacturing applications in which tracking a contour is essential for the quality of a product. The aim of this paper is to develop a framework for norm-optimal cross-coupled ILC that enables the use of exact contour errors that are calculated offline, and iteration- and time-varying weights. Conditions for the monotonic convergence of this iteration-varying ILC algorithm are developed. In addition, a resource-efficient implementation is proposed in which the ILC update law is reframed as a linear quadratic tracking problem, reducing the computational load significantly. The approach is illustrated on a simulation example.
\end{abstract}

\section{Introduction} \label{sec:intro}

%
%
%
%

The demands for accuracy and speed in manufacturing are ever increasing and necessitate improved control strategies. In many multiple-input multiple-output (MIMO) applications such as (3D) printing and CNC machining, accurately following a contour is essential for the quality of the final product. For increasingly complex parts with curved surfaces and sharp curvature variation, high feedrates lead to high contour errors, resulting in low processing qualities and efficiency \cite{Jia2018}. Standard control approaches that aim at following a time-based reference are not suitable for this type of applications.

Approaches to minimize contour errors include offline methods such as trajectory generation and pre-compensation \cite{Yang2015} and online methods such as improved control of individual axes and cross-coupled feedback control \cite{Koren1980,Huo2012a}. Cross-coupled feedback control uses an online estimate of the contour error, which is typically a reference-dependent combination of the individual axes errors. This introduces time-varying couplings between the system axes, leading to linear time-varying (LTV) or non-linear systems, for which feedback control is not straightforward \cite{Shih2002}. In addition, stability and computation times limit the accuracy of online contour error estimates for cross-coupled feedback control.

Iterative learning control (ILC) is capable of achieving high accuracy by updating a feedforward signal iteratively based on repeated experiments \cite{Bristow2006a}. Updating the input using previous error signals leads to high performance after only a small number of experiments. Typical frameworks include frequency-domain \cite{Blanken2020a} and lifted norm-optimal ILC \cite{Gunnarsson2001,Owens2016}. Typical norm-optimal ILC aims at reducing the deviation from a time-based reference signal for individual axes by minimizing a cost function that includes individual axes error and input signals. While this leads to high performance for the individual axes, standard ILC has limitations for contour tracking applications. In particular, it often leads to extremely large inputs in sharp corners, where it may be preferable to reduce the speed instead, and to limited velocities in straight parts due to the time-based reference. For contour tracking, the performance of ILC can be increased significantly by explicitly taking the contour error into account.

Several aspects are of importance for cross-coupled ILC. First, high accuracy of the contour error estimate is a requirement for precise control. Online contour error estimates vary in accuracy and computational load, and range from linear or circular \cite{Koren1980} to more complex parameter-based approximations \cite{Chen2016}. Since feedforward signals in ILC are calculated offline, the exact contour error can be used instead of these approximations. The exact contour error is constructed from the individual axes errors through coupling gains that follow from comparing an output coordinate to each point on the reference contour. These exact coupling gains are not only time-varying but also iteration-varying, and as such require an ILC framework that allows for variation over iterations.

Second, cross-coupled control should consider not only the contour error but also a tangential error to ensure that the system moves over the contour. For norm-optimal cross-coupled ILC, it is essential that the tuning of the design parameters of the algorithm is intuitive and reflects the trade-off between accuracy and speed represented by respectively the contour and tangential errors. In addition, the relative importance of contour and tangential errors typically differs over the reference, e.g., in corners accuracy is more important than speed. The framework should enable time-varying weighting to allow for these variations. An approach to cross-coupled ILC is proposed in \cite{Barton2008a}, in which standard ILC and feedback control for the individual axes are combined with PD-based ILC based on the contour error. In \cite{Barton2011} the method is extended to norm-optimal ILC with time-varying weighting. Both approaches use linear contour error estimates, limiting the achievable performance. In addition, while the norm-optimal ILC approach weights both individual axes and contour errors, the tuning is not intuitive since the trade-off between tangential and contour errors is not made explicit. 

Third, the considered MIMO applications often use long reference signals. However, standard norm-optimal lifted ILC methods are limited to reference signals with a relatively small number of samples, because the ILC update law involves matrix inversions for which the computational load scales badly with the signal length \cite{Zundert2016}. For cross-coupled ILC, a low-order solution with efficient computations that allows for long reference signals is essential. The implementation in \cite{Barton2011} requires inversion of large matrices and the approach is therefore limited to short reference signals. To remove restrictions on the reference length, non-lifted cross-coupled ILC is developed in \cite{Sun2014a}. This approach uses linear contour error approximations in individual cost functions at each error sample, and results in a PD-like ILC controller that does not allow for a direct feedthrough term in the plant.

Although several important steps have been taken towards cross-coupled ILC for contour tracking, a framework that uses exact contour errors and enables intuitive tuning and unlimited signal lengths is lacking. This paper aims to address these aspects, resulting in the following contributions. 
\begin{itemize}
	\item A cost function is introduced that allows the use of exact contour errors with iteration-varying coupling matrices, and that enables intuitive tuning of time- and iteration-varying weights (Section \ref{sec:approach}).
	\item Conditions for monotonic convergence of the ILC algorithm with the proposed iteration-varying cost function are given (Section \ref{sec:convergence}).
	\item A resource-efficient implementation based on linear quadratic tracking is proposed that allows for inexpensive and fast computations in case of iteration-varying ILC matrices and long reference signals (Section \ref{sec:implementation}).
\end{itemize}
The approach is illustrated using a simulated flatbed printer in Section \ref{sec:sims}. Conclusions are given in Section \ref{sec:conclusions}.

\section{Problem formulation} \label{sec:problem}

In this section, the contour tracking problem is defined and the norm-optimal ILC framework is introduced.

\subsection{Contour tracking}

Consider a discrete-time, linear time-varying (LTV) MIMO system with $n_i$ inputs and $n_o$ outputs. For reference $y_d$, the tracking error $e$ is given by
\begin{align} \label{eq:error}
	e = S y_d - J f,
\end{align}
with sensitivity $S = (I+PC)^{-1}$ for plant $P$ and controller $C$, process sensitivity $J = P(I + CP)^{-1}$, and feedforward input $f$ as shown in Fig. \ref{fig:par_ilc}. The system $J$ and input $f$ are given in lifted form by
\begin{align}\nonumber 
J &= \begin{bmatrix} H^{0,0} & \dots & 0 \\ \vdots & \ddots & \vdots \\ H^{N-1,0} & \dots & H^{N-1,N-1} \end{bmatrix},  \: f = \begin{bmatrix} f(0) \\ f(1) \\ \vdots \\ f(N-1) \end{bmatrix}, 
\end{align} 
with $J$ the convolution matrix of the LTV system, which is a lower triangular matrix with a block Toeplitz structure and entries $H^{i,j} \in \mathbb{R}^{n_o \times n_i}$. The error $e\in \mathbb{R}^{N n_o \times 1}$ is written similar to input $f\in \mathbb{R}^{N n_i \times 1}$, and $e(k) \in \mathbb{R}^{n_o \times 1}$ and $f(k) \in \mathbb{R}^{n_i \times 1}$ are of the form
$	e(k) = \begin{bmatrix}
		e^1(k) & e^2(k) & \dots & e^{n_o}(k)
	\end{bmatrix}\tran. $
The aim of the system is to track a contour described by the reference $y_d(k) \in \mathbb{R}^{n_o \times 1}$ accurately in space rather than in time. To that end, in addition to the time-based error $e(k)$, the contour error $\varepsilon_c(k) \in \mathbb{R}$ is defined as the distance between the position output $y(k) \in \mathbb{R}^{n_o \times 1}$ and the closest point on the contour, as illustrated in Fig. \ref{fig:contour_error}. Each contour error sample $\varepsilon_c(k)$ can be expressed as a function of $e(k)$ through a vector of coupling gains $c(k) \in \mathbb{R}^{n_o \times 1}$ according to
\begin{align}
	\varepsilon_c(k) = c(k)\tran e(k).
\end{align}
The coupling gains depend on the type of approximation used to determine $\varepsilon_c$, as is further explained in Section \ref{sec:approach}.

\begin{figure}
	\centering
	\includegraphics{./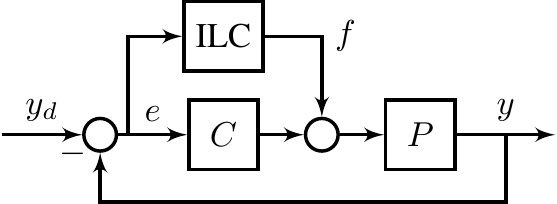}%
	\caption{Parallel ILC configuration. \label{fig:par_ilc}}%
	\vspace{-10pt}
\end{figure}

\begin{figure}
	\centering
	\includegraphics[width=.9\linewidth]{./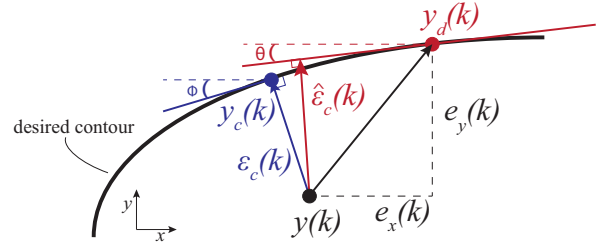}%
	\caption{For a 2D-system with reference $y_d(k)$, position output $y(k)$ and individual axes errors $e_x(k)$ and $e_y(k)$, the actual contour error $\varepsilon_c(k)$ (\protect\blueline) based on the closest point $y_c(k)$ on the contour differs from linear approximation $\hat{\varepsilon}_c(k)$ (\protect\redline). \label{fig:contour_error}}%
	\vspace{-15pt}
\end{figure}

\subsection{Norm-optimal iterative learning control}

In norm-optimal iterative learning control (ILC), the input $f$ of system (\ref{eq:error}) is updated iteratively according to
\begin{align} \label{eq:error_j}
	e_j = Sy_d - J f_j, \\
	f_{j+1} = Q f_j + L e_j, \label{eq:ff_j}
\end{align}
where error $e_j$ and $f_j$ depend on iteration $j$. The update matrices $Q$ and $L$ follow from the cost function
\begin{align} \label{eq:cost}
	\mathcal{J}(f_{j+1}) = \|e_{j+1}\|^2_{W_e} + \|f_{j+1}\|^2_{W_f} + \|f_{j+1} - f_j\|^2_{W_{\Delta f}},
\end{align}
with $\|x\|_{W} = \sqrt{x\tran W x}$. The minimizer of $\mathcal{J}(f_{j+1})$ can be determined analytically \cite{Gunnarsson2001} and leads to 
\begin{align} 
	Q &= (J\tran W_e J + W_f + W_{\Delta f})^{-1}(J\tran W_e J + W_{\Delta f})\\
	L &= (J\tran W_e J + W_f + W_{\Delta f})^{-1}J\tran W_e.
\end{align}
If the convolution matrix $J$ is non-singular, norm-optimal ILC leads to monotonic convergence of the sequence of input signals $\{f_j\}$ for $W_e \succ 0$, $W_f, W_{\Delta f} \succeq 0$. If $J$ is singular then $W_f \succ 0$ enforces monotonic convergence.

This paper aims to develop a framework for norm-optimal cross-coupled ILC that achieves high performance for contour tracking applications. A time- and iteration-varying cost function is developed that allows the use of exact contour errors as well as various estimations and that enables intuitive tuning. The convergence is analyzed and a resource-efficient implementation suitable for long references is proposed.

\section{Cost function design} \label{sec:approach}

In this section, a cost function is introduced that encompasses different configurations of cross-coupled ILC. It enables using exact contour errors with iteration-varying coupling gains and intuitive time- and iteration-varying weights.

\subsection{Cost function framework}
The following cost function is proposed, in which different approaches to cross-coupled ILC can be incorporated
\begin{align} \label{eq:cost2}
	\mathcal{J}(f_{j+1}) = \|e_{j+1}\|^2_{W_{ec,j}} + \|f_{j+1}\|^2_{W_{fc,j}} + \|f_{j+1} - f_j\|^2_{W_{\Delta fc,j}}. 
\end{align}
The weighting-coupling matrices $W_{ec,j}$, $W_{fc,j}$ and $W_{\Delta fc,j}$ may be iteration-varying and can take into account individual axes inputs and errors, different approximations of the contour error, and the tangential error that is perpendicular to the contour error. The matrices are of the form
\begin{align}
	W_{ec,j} &= C_{e,j}\tran W_{e,j} C_{e,j}, \quad
	W_{fc,j} = C_{f,j}\tran W_{f,j} C_{f,j}, \\ \nonumber
	W_{\Delta fc,j} &= C_{f,j}\tran W_{\Delta f,j} C_{f,j}
\end{align}
where the coupling matrices $C_{e,j}$ for $e$ and $C_{f,j}$ for $f$ may differ, for example when the MIMO system is non-square or different couplings for input and error are desired. The block-diagonal matrices are constructed as
\begin{align} \label{eq:coupling}
	C_{e,j} = \begin{bmatrix}
		C^1_{e,j} & \dots & 0 \\
		\vdots & \ddots & \vdots \\
		0 & \dots & C^N_{e,j}
	\end{bmatrix}.
\end{align}
The blocks $C^k_{e,j} \in \mathbb{R}^{n_{ce} \times n_o}$, with $n_{ce}$ the number of coupled error components, describe the coupling of the axes at each sample and have full column rank to ensure that each output component is taken into account. Standard norm-optimal ILC is recovered for $C^k_{e,j} = I^{n_i\times n_o} \:\forall k,j$. Possible configurations for a 2D system with $e(k) = \begin{bmatrix} e_y(k) & e_x(k)\end{bmatrix}\tran$ include:
\begin{itemize}[leftmargin=*]
	\item Cross-coupled ILC with linear approximations of the contour error $\hat{\varepsilon}_c$ and individual axes errors, with $\theta(k)$ the angle between reference sample $y_d(k)$ and the $x$-axis:
	\begin{align}
		C^k_{e,j} = \begin{bmatrix} 1 & 0 \\ 0& 1 \\ \cos(\theta(k)) & -\sin(\theta(k))  \end{bmatrix} \: \forall j,
		\end{align} 
	\item Cross-coupled ILC with linear approximations of the contour and tangential errors:
	\begin{align}
		C^k_{e,j} = \begin{bmatrix} \cos(\theta(k)) & -\sin(\theta(k)) \\ \sin(\theta(k)) & \cos(\theta(k))  \end{bmatrix} \: \forall j.
	\end{align} 
	\item Cross-coupled ILC with exact contour and tangential errors, with $\phi(k,j)$ the angle between the x-axis and the vector perpendicular to the contour error vector. The contour error vector is the vector from position $y_j(k)$ to the closest point on the contour $y_{c,j}(k)$, see Fig. \ref{fig:contour_error}. In this approach, the coupling gains may differ over iterations, resulting in iteration-varying weighting-coupling matrices:
	\begin{align} \label{eq:exact_cc}
		C^k_{e,j} = \begin{bmatrix} \cos(\phi(k,j)) & -\sin(\phi(k,j)) \\ \sin(\phi(k,j)) & \cos(\phi(k,j))  \end{bmatrix},
	\end{align} 
\end{itemize}
The matrix $C_{f,j}$ with blocks $C^k_{f,j} \in \mathbb{R}^{n_{cf} \times n_o}$ is designed similar to $C_{e,j}$. For systems with $n_o > 2$ the exact contour error definition can be extended easily, but approximations become more complicated \cite{Jia2018}. 

\subsection{Time- and iteration-varying weighting} \label{subsec:timevar}
The structure of cost function (\ref{eq:cost2}) allows for time-varying weighting matrices $W_{e,j}$, $W_{f,j}$ and $W_{\Delta f,j}$. The matrices have the following block-diagonal structure:
\begin{align}
	W_{e,j} = \begin{bmatrix}
		W^1_{e,j} & \dots & 0 \\
		\vdots & \ddots & \vdots \\
		0 & \dots & W^N_{e,j}
	\end{bmatrix},
\end{align}
where the size of $W^k_{e,j} \in \mathbb{R}^{n_{ce} \times n_{ce}}$ depends on that of the corresponding block $C^k_{e,j}$ of the coupling matrix. The diagonal blocks $W^k_{e,j}$ apply weights to each element of the coupled error. For example, for $C^k_{e,j}$ according to (\ref{eq:exact_cc}), i.e., exact contour and tangential errors, it holds that
\begin{align}
	C^k_{e,j}  \begin{bmatrix} e_y(k) \\ e_x(k)\end{bmatrix} = \begin{bmatrix} \varepsilon_c(k) \\ \varepsilon_t(k) \end{bmatrix}, 
\end{align}
such that the first diagonal element of $W^k_{,je}$ puts a weight on the contour error $\varepsilon_c(k)$ and the second diagonal element puts a weight on the tangential error $\varepsilon_t(k)$. The blocks $W^k_{e,j}$ need not be identical for all $k$, allowing time-varying weighting such as increased weights on the contour error in corners to increase accuracy. The weights may also be iteration-varying, e.g., one could weight only individual axes errors initially and add weights on the contour error after some iterations.

\section{Monotonic convergence} \label{sec:convergence}

In this section, conditions are developed for the monotonic convergence towards a closed 2-norm ball of cross-coupled ILC with the iteration-varying cost function of Section \ref{sec:approach}. The ILC update based on cost function (\ref{eq:cost2}) is given by
\begin{align} \label{eq:f_update}
	f_{j+1} =& Q_j f_j + L_j e_j, \: \text{with} \\
	Q_j =& (J\tran W_{ec,j} J + W_{ef,j} + W_{\Delta fc,j})^{-1} \\ \nonumber 
	&(J\tran W_{ec,j} J + W_{\Delta fc,j}), \\ \label{eq:filters}
	L_j =& (J\tran W_{ec,j} J + W_{ef,j} + W_{\Delta fc,j})^{-1}J\tran W_{ec,j}. 
\end{align}
The sets containing all possible filters $Q_j$ and $L_j$ are given by $\mathcal{Q}$ and $\mathcal{L}$, respectively. Since the reference and position are sampled with finite resolution, the sets $\mathcal{Q}$ and $\mathcal{L}$ are finite provided that the set of iteration-varying weights is also finite. First, monotonic convergence towards a closed $2$-norm ball is defined, then a convergence theorem is given.

\begin{defn}[Closed $2$-norm ball]
	The closed 2-norm ball $B_2(c,d)$ with center $c\in\mathbb{R}$ and radius $d \in \mathbb{R}_{\geq 0}$ is defined as $B_2(c,d) := \{x\in \mathbb{R} | \| x-c\|_2 \leq d\}$.
\end{defn}

\begin{defn}[Monotonic convergence towards a closed $2$-norm ball]
	The sequence $\{y_i\}$, $y_i \in \mathbb{R}$ is said to converge monotonically in the 2-norm to the 2-norm ball $B_2(c,d)$ if there exists $\kappa \in [0,1)$ such that for all $i \in \mathbb{Z}_{\geq 0}$,
	\begin{align}
		&\|y_{i+1} - c\|_2 \leq \kappa \|y_i -c\|_2 \quad \text{if } \: y_i \notin B_2(c,d), \\
		&y_{j+1} \in B_2(c,d) \qquad \qquad \quad \: \text{if } \: y_i \in B_2(c,d).
	\end{align}
\end{defn}

\begin{thm} \label{thm:conv}
	The sequence of inputs $\{f_j\}$ that follows from update law (\ref{eq:f_update}), with iteration-varying filters $Q_j \in \mathcal{Q}$ and $L_j \in \mathcal{L}$ that minimize criterion (\ref{eq:cost2}) according to (\ref{eq:filters}), is monotonically convergent towards a closed $2$-norm ball if the coupling matrix $C_{e,j}$ has full column rank, $W_{e,j} \succ 0 \: \forall j$ and either,
	\begin{itemize}
		\item if $J$ is non-singular, $W_{f,j}, W_{\Delta f,j} \succeq 0 \: \forall j$, or, 
		\item if $J$ is singular, $C_{f,j}$ has full column rank and $W_{f,j} \succ 0, W_{\Delta f,j} \succeq 0 \: \forall j$.
	\end{itemize}
\end{thm}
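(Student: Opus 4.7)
The strategy is to rewrite (\ref{eq:f_update}) in closed form, verify that the filters are well posed, derive a uniform per-iteration contraction in a Lyapunov-weighted norm, and translate this into monotonic $2$-norm convergence towards a ball. Substituting (\ref{eq:error_j}) into (\ref{eq:f_update}) yields
\begin{equation*}
f_{j+1} = (Q_j - L_j J)\, f_j + L_j S y_d.
\end{equation*}
Write $M_j := J\tran W_{ec,j}J + W_{fc,j} + W_{\Delta fc,j}$, so that the filters in (\ref{eq:filters}) are well defined whenever $M_j\succ 0$. The full-column-rank assumption on $C_{e,j}$ together with $W_{e,j}\succ 0$ gives $W_{ec,j}\succ 0$; if $J$ is non-singular then $J\tran W_{ec,j}J\succ 0$ so $M_j\succ 0$ for any $W_{f,j},W_{\Delta f,j}\succeq 0$, while if $J$ is singular the parallel hypotheses on $C_{f,j}$ and $W_{f,j}$ yield $W_{fc,j}\succ 0$ and again $M_j\succ 0$. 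Both branches of the hypothesis are thus covered in one pass.

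A direct algebraic simplification of (\ref{eq:filters}) gives $Q_j - L_j J = M_j^{-1} W_{\Delta fc,j}$, so that
\begin{equation*}
(Q_j - L_j J)\tran M_j (Q_j - L_j J) = W_{\Delta fc,j}\, M_j^{-1}\, W_{\Delta fc,j}.
\end{equation*}
Since $W_{\Delta fc,j}\preceq M_j$ with strict deficit $J\tran W_{ec,j}J + W_{fc,j}\succ 0$ under the stated positivity, a generalized-eigenvalue argument yields $\|Q_j - L_j J\|_{M_j}\leq \kappa_j < 1$ per iteration. Finiteness of $\mathcal{Q}\times\mathcal{L}$ then promotes this to a uniform contraction $\kappa := \max_j \kappa_j < 1$ and uniform spectral bounds $\mu I \preceq M_j \preceq \nu I$ with $0<\mu\leq\nu$.

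To reach the $2$-norm statement of Definition~2, I would fix a center $c$ and rearrange as $f_{j+1}-c = (Q_j - L_j J)(f_j - c) + r_j$, with $r_j := L_j S y_d - (I - Q_j + L_j J)c$. The finite family $\{r_j\}$ is then uniformly bounded, and combining the $M_j$-norm contraction with norm equivalence gives
\begin{equation*}
\|f_{j+1}-c\|_2 \leq \kappa\sqrt{\nu/\mu}\,\|f_j-c\|_2 + \sqrt{\nu/\mu}\,\sup_j\|r_j\|_2.
\end{equation*}
Inflating the radius $d$ until the residual term is absorbed then yields strict $2$-norm decay outside $B_2(c,d)$ and forward invariance inside, which is exactly Definition~2.

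The main obstacle is precisely this last step: because $M_j$ varies with the iteration, the effective $2$-norm contraction factor $\kappa\sqrt{\nu/\mu}$ is not automatically below one when the condition number $\nu/\mu$ is large. This is where finiteness of $\mathcal{Q}$ and $\mathcal{L}$ is essential and used twice---once to make $\kappa$ strict, and once to bound the conditioning overhead between $M_j$ and the Euclidean metric, forcing $d$ to absorb what the single-step contraction cannot. If that ratio were too loose, I would fall back on constructing a common Lyapunov matrix $P\succ 0$ for the finite family $\{Q_j - L_j J\}$ via a joint Lyapunov argument, which would eliminate the norm-equivalence overhead and give cleaner monotone decay in a single $P$-weighted inner product, followed by a final equivalence back to the unweighted $2$-norm of Definition~2.
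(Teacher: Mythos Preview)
Your primary route runs into exactly the obstacle you flag: contraction in the iteration-dependent $M_j$-norm does not yield monotone $2$-norm decrease once the conditioning factor $\sqrt{\nu/\mu}$ enters, and the common-Lyapunov fallback inherits the same defect at the last step---monotone decrease in $\|\cdot\|_P$ gives eventual confinement to a $2$-norm ball but not the per-step $2$-norm decrease that Definition~2 demands outside the ball. So as written the plan does not close.

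The paper sidesteps the whole weighted-norm bookkeeping by a different decomposition. It first proves an auxiliary equivalence (Lemma~\ref{lem:conv_ball}): the iteration-varying update converges monotonically in the $2$-norm to a closed ball if and only if every \emph{frozen} system, with $Q_j=\bar Q$ and $L_j=\bar L$ held constant, converges monotonically in the $2$-norm to a fixed point. The forward implication is a purely geometric ball-inclusion argument built from triangle inequalities and the finiteness of $\mathcal{Q}\times\mathcal{L}$. With that lemma in hand, the theorem reduces to verifying $\|\bar Q-\bar L J\|_2<1$ for each frozen pair individually, which the paper argues directly from $\bar Q-\bar L J=(A+B)^{-1}B$ with $A=J\tran \bar W_{ec}J+\bar W_{fc}\succ 0$ and $B=\bar W_{\Delta fc}\succeq 0$, invoking $\bar\sigma\bigl((A+B)^{-1}B\bigr)<1$. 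Your Steps concerning well-posedness of $M_j$ and the identity $Q_j-L_jJ=M_j^{-1}W_{\Delta fc,j}$ match the paper's Steps~2--3 exactly; what you are missing is the reduction device of Lemma~\ref{lem:conv_ball}, which lets you work one frozen system at a time in the unweighted $2$-norm and never forces you to reconcile varying $M_j$-metrics.
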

\noindent
The following auxiliary lemma is used in the proof.

\begin{lem} \label{lem:conv_ball}
	For iteration (\ref{eq:f_update}), the following two statements are equivalent:
	\begin{enumerate}
		\item The sequence of inputs $\{f_j\}$ with fixed $Q_j = \bar{Q} \in \mathcal{Q}$ and $L_j = \bar{L} \in \mathcal{L}$ for all $j$ is monotonically convergent in the 2-norm to a fixed point.
		\item The sequence of inputs $\{f_j\}$ with iteration-varying $Q_j \in \mathcal{Q}$ and $L_j \in \mathcal{L}$ is monotonically convergent in the 2-norm towards a closed 2-norm ball.
	\end{enumerate}
\end{lem}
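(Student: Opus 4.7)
The plan is to reformulate the update $f_{j+1} = Q_j f_j + L_j e_j = A_j f_j + b_j$ with $A_j := Q_j - L_j J$ and $b_j := L_j S y_d$, and then exploit the finiteness of $\mathcal{Q}$ and $\mathcal{L}$. The key observation, which I would establish first, is that statement (1) for a single fixed pair $(\bar Q,\bar L)$ is equivalent to $\|\bar Q-\bar L J\|_2<1$. Indeed, for fixed filters the unique fixed point $f^{\ast}$ exists whenever $I-\bar Q+\bar L J$ is nonsingular, and then $f_{j+1}-f^{\ast}=(\bar Q-\bar L J)(f_j-f^{\ast})$, so monotonic contraction of $\{f_j\}$ in the 2-norm to $f^{\ast}$ for every initial $f_0$ holds iff $\|\bar Q-\bar LJ\|_2<1$.

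For the direction (1)$\Rightarrow$(2) I would then use finiteness of $\mathcal{Q}\times\mathcal{L}$ to set
\begin{equation*}
  \bar{\kappa}:=\max_{(Q,L)\in\mathcal{Q}\times\mathcal{L}}\|Q-LJ\|_2<1,\qquad
  \bar{b}:=\max_{L\in\mathcal{L}}\|LSy_d\|_2<\infty.
\end{equation*}
The triangle inequality applied to the affine recursion gives $\|f_{j+1}\|_2\leq\bar{\kappa}\|f_j\|_2+\bar{b}$ uniformly in $j$. Choosing any $\kappa^{\star}\in(\bar{\kappa},1)$, setting the radius $d:=\bar{b}/(\kappa^{\star}-\bar{\kappa})$ and the center $c:=0$, a short calculation then shows (a) if $\|f_j\|_2>d$ then $\|f_{j+1}\|_2\leq\kappa^{\star}\|f_j\|_2$, and (b) if $\|f_j\|_2\leq d$ then $\|f_{j+1}\|_2\leq\bar{\kappa}\,d+\bar{b}=\kappa^{\star}d\leq d$, so the ball $B_2(0,d)$ is forward invariant. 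This is precisely the required monotonic convergence towards a closed 2-norm ball with contraction factor $\kappa^{\star}$.

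For the reverse (2)$\Rightarrow$(1), I would note that the constant sequence $Q_j\equiv\bar Q$, $L_j\equiv\bar L$ is a special admissible choice, so (2) gives monotonic convergence of the corresponding fixed-filter iteration to some ball $B_2(c,d)$ with rate $\kappa<1$. Applied to the affine iterates $f_{j+1}-c=(\bar Q-\bar LJ)(f_j-c)+\big[(\bar Q-\bar LJ-I)c+\bar LSy_d\big]$ and evaluating along $f_j=c+tv$ for a unit vector $v$ and $t\to\infty$, the contraction inequality $\|f_{j+1}-c\|_2\leq\kappa\|f_j-c\|_2$ forces $\|(\bar Q-\bar LJ)v\|_2\leq\kappa$, hence $\|\bar Q-\bar LJ\|_2\leq\kappa<1$. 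By the equivalence from the first step, this yields monotonic convergence of $\{f_j\}$ to the fixed point, i.e.\ statement (1).

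The main obstacle I anticipate is the scaling argument in the reverse direction: one must rule out the possibility that the ball-convergence in (2) is only a consequence of a bounded constant term, rather than of a genuinely contractive linear part. Pushing the initial condition to infinity along arbitrary unit directions is the trick that forces the induced $2$-norm of $\bar Q-\bar LJ$ below one, and this is what turns a statement about a ball into a statement about a fixed point. Everything else reduces to the uniform bounds provided by the finiteness of $\mathcal{Q}$ and $\mathcal{L}$ and standard triangle-inequality bookkeeping.
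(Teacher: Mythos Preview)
Your proof is correct and takes a genuinely different, somewhat more explicit route than the paper's.

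You begin by reducing statement~(1) for a single pair to the contractivity condition $\|\bar Q-\bar L J\|_2<1$; the paper does not do this inside the lemma but establishes it separately in the proof of the subsequent theorem. For $(1)\Rightarrow(2)$, the paper keeps an arbitrary center $c$ and, at each iteration, detours through the fixed point $\bar f_{j,\infty}$ of the frozen system with filters $(Q_j,L_j)$: two triangle inequalities give $f_j\in B_2(c,d)\Rightarrow f_{j+1}\in B_2\bigl(c,\kappa d+(1+\kappa)\|c-\bar f_{j,\infty}\|_2\bigr)$, and finiteness of $\mathcal Q\times\mathcal L$ then produces an invariant radius. You instead work directly with the affine form $f_{j+1}=A_jf_j+b_j$, take the uniform bounds $\bar\kappa=\max\|A_j\|_2<1$ and $\bar b=\max\|b_j\|_2$, fix $c=0$, and exhibit the ball of radius $\bar b/(\kappa^\star-\bar\kappa)$ explicitly. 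Your version is more constructive and gives the ball in closed form; the paper's keeps the center free but is less transparent about the final radius.

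For $(2)\Rightarrow(1)$, the paper simply remarks that constant filters are an admissible iteration-varying choice, hence the fixed-filter iteration converges monotonically to the same ball, and then asserts without further argument that an iteration-invariant system converging to a ball must converge to a fixed point inside it. Your scaling argument---taking $f_j=c+tv$ with $t\to\infty$ to force $\|(\bar Q-\bar LJ)v\|_2\le\kappa$ for every unit $v$, hence $\|\bar Q-\bar LJ\|_2<1$---supplies exactly the justification the paper leaves implicit, so your treatment of this direction is in fact more complete.
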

\begin{proof}[Proof of Lemma \ref{lem:conv_ball}]
	To show that $1) \implies 2)$, assume first that for each $\bar{Q} \in \mathcal{Q}$ and $\bar{L} \in \mathcal{L}$ the sequence of inputs $\{f_j\}$ of the corresponding iteration-invariant ILC system converges monotonically to a fixed point $\bar{f}_\infty$, i.e.,
	\begin{align} \label{eq:proof_eq1}
		\|f_{j+1} - \bar{f}_\infty\|_2 \leq \kappa \|f_j - \bar{f}_\infty\| \: \forall \: j,
	\end{align}
	for some universal $\kappa \in [0,1)$. Consider $f_j \in B_2(c,d)$ and given $Q_j \in \mathcal{Q}, L_j \in \mathcal{L}$. It holds that $B_2(c,d) \subset B_2(\bar{f}_{j,\infty},d+\|c-\bar{f}_{j,\infty}\|_2)$ because
	\begin{align}
		\|f_{j} - \bar{f}_{j,\infty}\|_2 \leq \|f_{j} -c\|_2 + \|c-\bar{f}_{j,\infty}\|_2.
	\end{align}
	Therefore, if $f_j \in B(c,d)$, i.e., $\|f_{j} -c\|_2 \leq d$, then
	\begin{align}
		\|f_{j} - \bar{f}_{j,\infty}\|_2 \leq d + \|c-\bar{f}_{j,\infty}\|_2.
	\end{align}
	It follows from (\ref{eq:proof_eq1}) that $f_j \in B_2(\bar{f}_{j,\infty},d+\|c-\bar{f}_{j,\infty}\|_2) \implies f_{j+1} \in B_2(\bar{f}_{j,\infty},\kappa d+ \kappa \|c-\bar{f}_{j,\infty}\|_2)$, which leads to 
	\begin{align*}
		\|f_{j+1} - \bar{f}_{j,\infty}\|_2 \leq \kappa (d+\|c-\bar{f}_{j,\infty}\|_2).
	\end{align*}
	It holds that $B_2(\bar{f}_{j,\infty},\kappa d+ \kappa \|c-\bar{f}_{j,\infty}\|_2) \subset B_2(c, \kappa d + (1+\kappa)\|c-\bar{f}_{j,\infty}\|_2)$ because
	\begin{align}
		\|f_{j+1} - c\|_2 &\leq \|f_{j+1} - \bar{f}_{j,\infty}\|_2 + \|c-\bar{f}_{j,\infty}\|_2 \\ \nonumber
		&\leq \kappa (d+\|c-\bar{f}_{j,\infty}\|_2) + \|c-\bar{f}_{j,\infty}\|_2. 
	\end{align} 
	It follows that $f_j \in B_2(c,d) \implies f_{j+1} \in B_2(c, \kappa d + (1+\kappa)\|c-\bar{f}_{j,\infty}\|_2) \: \forall \: Q_j \in \mathcal{Q}, \: L_j \in \mathcal{L}$. Next, consider the set $B(c,d^*)$ with
	\begin{align}
		d^* = \max_{\bar{Q} \in \mathcal{Q}, \bar{L} \in \mathcal{L}} \kappa d + (1+\kappa) \|c-\bar{f}_\infty\|_2.
	\end{align}
	Thus if $f_j \in B_2(c,d)$, then $f_{j+1} \in B_2(c,d^*) \: \forall \: Q_j \in \mathcal{Q}, \: L_j \in \mathcal{L}$. Since the sets $ \mathcal{Q}$ and $ \mathcal{L}$ are finite and $\kappa \in [0,1)$, there exists $a$ such that $d > d^*$ if $d > a$, and $d=d^*$ if $d = a$.

	To show that $2) \implies 1)$, assume that the sequence of inputs $\{f_j\}$ is monotonically convergent in the 2-norm to a closed 2-norm ball given by $B_2(c,a)$, i.e.,
	\begin{align} \label{eq:conv_proof1}
		\|f_{j+1} - c\|_2 \leq \kappa \|f_j - c\|_2 \: \text{if} \: \|e_j-c\|_2 > a,
	\end{align}
	Since this is satisfied for any iteration-varying $Q_j \in \mathcal{Q}, \: L_j \in \mathcal{L}$, the iteration-invariant case where $Q_j = \bar{Q}, \: L_j = \bar{L} \: \forall j$ also satisfies (\ref{eq:conv_proof1}). Therefore, each of the iteration-invariant systems converges monotonically towards the closed 2-norm ball $B_2(c,a)$ and since they are iteration-invariant, they converge monotonically to a fixed point in this set.
\end{proof}
\noindent

\begin{proof}[Proof of Theorem \ref{thm:conv}]
	The proof consists of four steps. 
	
	\noindent
	\textit{Step 1.} The sequence of inputs $\{f_j\}$ in (\ref{eq:f_update}) for fixed $\bar{Q} \in \mathcal{Q}$ and $\bar{L} \in \mathcal{L}$ is monotonically convergent in the 2-norm if the mapping from $f_j$ to $f_{j+1}$ is a contraction mapping according to the Banach fixed point theorem \cite[Theorem 5.1-2]{Kreyszig1978}. Substituting (\ref{eq:error}) in (\ref{eq:f_update}) shows that this is satisfied if
	\begin{align} \label{eq:proof_cond}
		\|\bar{Q} - \bar{L} J\|_2 < 1.
	\end{align}	
	\textit{Step 2.} It holds that $\bar{\sigma} (\bar{Q} - \bar{L} J) \leq \|\bar{Q} - \bar{L} J\|_2$. From (\ref{eq:filters}) it follows that (\ref{eq:proof_cond}) is satisfied if
	\begin{align*}
		\bar{\sigma} (\bar{Q} - \bar{L} J) = \bar{\sigma}((J\tran \bar{W}_{ec} J + \bar{W}_{fc} + \bar{W}_{\Delta fc})^{-1} \bar{W}_{\Delta fc}) < 1.
	\end{align*}
	It holds that $\bar{\sigma}((A+B)^{-1} B) < 1$ for $A \succ 0, \: B \succeq 0$. For a positive (semi)definite matrix $M$, $A\tran M A$ is positive (semi)definite if $A$ has full column rank. Thus for non-singular $J$, $\bar{W}_{ec} \succ 0$, $\bar{W}_{fc}, \bar{W}_{\Delta fc}\succeq 0$ ensures monotonic convergence. For singular $J$, $\bar{W}_{fc} \succ 0$ is needed also.	
	
	\noindent
	\textit{Step 3.} Matrices $W_{ec}$, $W_{fc}$ and $W_{\Delta fc}$ are structured as $C\tran W C$. Thus, $\bar{W}_{ec}, \bar{W}_{fc} \succ 0$ is satisfied if $C_{e}$ respectively $C_{f}$ has full column rank and $W_{e}$ respectively $W_{f} \succ 0$. Additionally, $\bar{W}_{\Delta fc}\succeq 0$ is satisfied for $W_{\Delta f} \succeq 0$.
	
	\noindent	
	\textit{Step 4.} Applying Step 1-3 for each $Q_j \in \mathcal{Q}$, $L_j \in \mathcal{L}$ and combining with Lemma \ref{lem:conv_ball} concludes the proof.	
\end{proof}
Note that for iteration-invariant weighting-coupling matrices, Theorem \ref{thm:conv} ensures monotonic convergence in the 2-norm to a fixed point instead. Theorem \ref{thm:conv} reduces the design of cross-coupled ILC for monotonic convergence to choosing suitable weights and couplings, enabling intuitive design. It is possible to find an expression for the smallest closed 2-norm ball to which the system converges, a result used in a preliminary version of \cite{Strijbosch2022}, see \cite[Theorem III.9]{Strijbosch2019g}.

\section{Resource-efficient implementation} \label{sec:implementation}

In this section, the cross-coupled ILC update law (\ref{eq:f_update}) is reframed as a linear quadratic tracking problem with a resource-efficient solution, reducing the computational load significantly and enabling long reference signals. This is especially useful for iteration-varying cost functions, since  $Q_j$ and $L_j$ are not calculated explicitly and the inversion in (\ref{eq:filters}), that would otherwise limit the size of the lifted matrices and thus the length of the reference signal, is avoided. 

Process sensitivity $J$ is rewritten to state-space description  
\begin{align}
	\textbf{J}_{SP} = \left[\begin{array}{c|c} 
		A& B\\ 
		\hline 
		C & D 
	\end{array}\right]. 
\end{align}
Due to the block-diagonal structure of $W_{ec,j}$, $W_{fc,j}$ and $W_{\Delta fc,j}$ the cost function (\ref{eq:cost2}) can be written as
\begin{align} \label{eq:cost3}
	\mathcal{J}(f_{j+1}) = \sum_{k=1}^{N} &\|e_{j+1}(k)\|^2_{W^k_{ec,j}} +  \|f_{j+1}(k)\|^2_{W^N_{fc,j}}  \\ \nonumber &+ \|f_{j+1}(k) - f_j(k)\|^2_{W^k_{\Delta fc,j}}, 
\end{align}
where, consistent with the previous notation, $W^k_{ec,j} = (C^k_{e,j})\tran W^k_{e,j} C^k_{e,j}$ etc. From (\ref{eq:error_j}) it follows that $e_{j+1} =  e_j - J (f_{j+1}-f_j)$. In addition, $\Delta f_{j+1}(k) = f_{j+1}(k) - f_j(k)$ and $\Delta e_{j+1}(k) = e_{j+1}(k) - e_j(k)$ are defined. This leads to the following theorem that relates the optimal input $f_{j+1}$ in (\ref{eq:cost2}) to the solution of a linear quadratic tracking problem. 

\begin{thm} \label{thm:riccati}
	The optimal ILC input that minimizes (\ref{eq:cost2}) with $W_{\Delta fc,j} \succ 0$ is the solution to the linear quadratic tracking problem with cost function
	\begin{align} 
		&\mathcal{J}(\Delta f_{j+1}) = \sum_{k=1}^{N} \Delta f_{j+1}\tran(k) \underbrace{W^k_{\Delta fc,j}}_{R^k_j} \Delta f_{j+1}(k) +\\ \nonumber
		&  \begin{bmatrix}
			e_j(k) + \Delta e_{j+1}(k) \\ f_j(k) + \Delta f_{j+1}(k)  \end{bmatrix}\tran \underbrace{\begin{bmatrix} W^k_{ec,j} & 0 \\ 0 & W^k_{fc,j} \end{bmatrix}}_{S^k_j} \begin{bmatrix}
		    e_j(k) + \Delta e_{j+1}(k) \\ f_j(k) + \Delta f_{j+1}(k) \end{bmatrix}, 
	\end{align}	
	subject to the dynamics
	\begin{align}
	\Delta x_{j+1}(k+1) &= A \Delta x_{j+1}(k) + B \Delta f_{j+1}(k), \\
	\Delta y_{j+1}(k) &= \begin{bmatrix} \Delta e_{j+1}(k) \\ \Delta f_{j+1}(k) \end{bmatrix} \\ \nonumber & = \begin{bmatrix} C \\ 0 \end{bmatrix} \Delta x_{j+1}(k) + \begin{bmatrix} D \\ I \end{bmatrix} \Delta f_{j+1}(k).
	\end{align} 	
\end{thm}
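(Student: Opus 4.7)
The plan is to recast the ILC problem as an LQ tracking problem through two basic operations: a change of decision variable from $f_{j+1}$ to the update $\Delta f_{j+1}$, and the introduction of a state-space representation that enforces the relation between the error increment $\Delta e_{j+1}$ and $\Delta f_{j+1}$ implied by (\ref{eq:error_j}). The hypothesis $W_{\Delta fc,j}\succ 0$ plays a double role: it guarantees that the change of variables is a bijective affine transformation (so the two optimizers coincide) and that the resulting stage weight on $\Delta f_{j+1}$ is positive definite, rendering the LQ problem well-posed.

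\textbf{Step 1: stage-wise cost.} First, exploit the block-diagonal structure of $W_{ec,j}$, $W_{fc,j}$ and $W_{\Delta fc,j}$ to write the lifted cost (\ref{eq:cost2}) as the sum over samples in (\ref{eq:cost3}). Then substitute $e_{j+1}(k)=e_j(k)+\Delta e_{j+1}(k)$ and $f_{j+1}(k)=f_j(k)+\Delta f_{j+1}(k)$ into each term. Grouping the remaining quadratic forms in the augmented vector $[\,e_j(k)+\Delta e_{j+1}(k);\, f_j(k)+\Delta f_{j+1}(k)\,]$ with block-diagonal weight $S^k_j=\mathrm{blkdiag}(W^k_{ec,j},W^k_{fc,j})$ and isolating the $\|\Delta f_{j+1}(k)\|^2_{R^k_j}$ term with $R^k_j=W^k_{\Delta fc,j}$ yields exactly the running cost in the theorem.

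\textbf{Step 2: dynamical constraint.} Subtract $e_j=Sy_d-Jf_j$ from $e_{j+1}=Sy_d-Jf_{j+1}$ to obtain $\Delta e_{j+1}=-J\Delta f_{j+1}$. Because $J$ is the lower-triangular lifted convolution matrix of the LTV process sensitivity with realization $\mathbf{J}_{SP}=(A,B,C,D)$, this lifted input-output relation is reproduced sample-by-sample by the recursion $\Delta x_{j+1}(k+1)=A\Delta x_{j+1}(k)+B\Delta f_{j+1}(k)$ with $\Delta x_{j+1}(0)=0$ and readout $\Delta e_{j+1}(k)=C\Delta x_{j+1}(k)+D\Delta f_{j+1}(k)$, once the sign is absorbed into the realization. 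The second output channel $\Delta f_{j+1}(k)$ is produced trivially by the output matrices $[0\;;\,I]$ and $[C\;;\,0]$ in the theorem, so the augmented $\Delta y_{j+1}(k)$ carries both error- and input-increments required by $S^k_j$.

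\textbf{Step 3: equivalence and well-posedness.} Since minimizing over $f_{j+1}$ and over $\Delta f_{j+1}=f_{j+1}-f_j$ at fixed $f_j$ is the same problem under an invertible affine reparametrization, and since the stage cost and the constraint coincide term-by-term with those of the LQ tracking problem, the two minimizers are identical. The condition $W_{\Delta fc,j}\succ 0$ ensures $R^k_j\succ 0$, so the LQ problem admits the standard Riccati-based solution and a unique minimizer.

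\textbf{Main obstacle.} Conceptually the proof is a bookkeeping exercise; the only non-trivial point is matching the sign in $\Delta e_{j+1}=-J\Delta f_{j+1}$ with the state-space realization of $J$. This is resolved by taking $\mathbf{J}_{SP}$ to realize $-J$ (equivalently, folding the minus sign into $C$ and $D$), after which the identification of the ILC cost with the LQ running cost and the recovery of the Riccati solution are immediate.
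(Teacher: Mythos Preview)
Your proposal is correct and follows essentially the same route as the paper's proof: substitute $e_{j+1}=e_j-J\Delta f_{j+1}$ into the stage-wise cost (\ref{eq:cost3}), change decision variable to $\Delta f_{j+1}$, append $\Delta f_{j+1}$ as an extra output to obtain an LQT problem with direct feedthrough, and invoke $W_{\Delta fc,j}\succ 0$ to get $R^k_j\succ 0$ (with $S^k_j\succeq 0$). Your explicit treatment of the sign in $\Delta e_{j+1}=-J\Delta f_{j+1}$ versus the realization $\mathbf{J}_{SP}$ of $J$ is a point the paper leaves implicit; absorbing the minus sign into $(C,D)$ as you suggest is the natural fix.
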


\begin{proof}
	The proof follows from substituting $e_{j+1} =  e_j - J (f_{j+1}-f_j)$ and $\Delta e_{j+1} = -J(f_{j+1}-f_j) = -J \Delta f_{j+1}$ in (\ref{eq:cost3}). The problem is reframed as an LQT problem with direct feedthrough by adding output $\Delta f_{j+1}$. Taking $W_{\Delta fc,j} \succ 0$ ensures that $S^k_j \succeq 0, \: R^k_j \succ 0$ and concludes the proof. 
\end{proof}
The solution in Theorem \ref{thm:riccati} is identical to the lifted ILC update (\ref{eq:f_update}), in contrast to the non-lifted approach in \cite{Sun2014a} which minimizes an individual cost function at each sample and as such is fundamentally different. The solution to discrete-time LQT problems with $S^k_j \succeq 0, \: R^k_j \succ 0$ is well-known, see \cite[Section 4.4]{Lewis2012} or \cite{Ebrahimzadeh2017} for the situation with direct feedthrough. A Hamiltonian system is defined, leading to a two-point boundary value problem to which a sweep method is applied. For cross-coupled ILC, define
\begin{align} \nonumber 
	\bar{C} = \begin{bmatrix} C \\ 0 \end{bmatrix}, \: \bar{D} = \begin{bmatrix} D \\ I \end{bmatrix}, \:
	r_j(k) = \begin{bmatrix}-e_j(k) \\ -f_j(k) \end{bmatrix}. 
\end{align}
The optimal input is $f_{j+1}(k) = f_j(k) + \Delta f_{j+1}(k)$, where
\begin{align}
	\Delta f_{j+1}(k) = &-\bar{R}_j^{-1}(k) \bar{P}_j(k+1) \Delta x_{j+1}(k) \\ \nonumber  & + \bar{R}_j^{-1}(k) \bar{D} \tran S^k_j r_j(k) + \bar{R}_j^{-1}(k) B\tran v(k+1), 
	\end{align} \begin{align}
	\text{with} \quad \bar{R}_j(k) =& R^k_j + \bar{D}\tran S^k_j \bar{D} + B\tran P_j(k+1) B, \\
	\bar{P}_j(k+1) =& B\tran P_j(k+1) A + \bar{D}\tran (S^k_j)\tran \bar{C}.
\end{align}
The terms $v_j(k+1)$ and $P_j(k+1)$ follow from solving the following equations backwards in time:
\begin{align}
	P_j(k) =& A\tran P_j(k+1) A + \bar{C}\tran S^k_j \bar{C} - \\ \nonumber
	& (A\tran P_j(k+1) B + \bar{C}\tran S^k_j \bar{D}) \bar{R}^{-1}_j(k) \bar{P}_j(k+1), \\
	v_j(k) =& - \left(\bar{P}_j\tran(k+1) \bar{R}^{-1}_j(k)  \bar{D}\tran - \bar{C}\tran \right) S^k_j r_j(k)  \\ \nonumber &- \left(\bar{P}_j\tran(k+1) \bar{R}^{-1}_j(k) B\tran - A\tran \right) v_j(k+1),
\end{align}
with $x(0) = 0$ and boundary conditions
\begin{align}
	P_j(N) &= \bar{C}\tran S^N_j \bar{C}, \\
	v_j(N) &= \bar{C}\tran S^N_j(\bar{D} \Delta f_{j+1} - r_j(N)). 
\end{align} 
Compared to Theorem \ref{thm:conv}, Theorem \ref{thm:riccati} also requires $W_{\Delta fc,j} \succ 0$. This is not limiting in practice, as $W_{\Delta fc,j} \succ 0$ is also required to limit the amplification of iteration-varying disturbances \cite{Oomen2017}. A similar low-order solution to ILC, which omits the explicit formulation of the ILC update law as an LQT problem, is applied to the specific cases of ILC for intersample behavior in \cite{Oomen2011} and norm-optimal ILC in \cite{Zundert2016}.

\section{Example} \label{sec:sims}

\begin{figure}
	\centering
	\setlength\figurewidth{.8\defcolwidth}
	\setlength\figureheight{.6\figurewidth}
	\includegraphics{./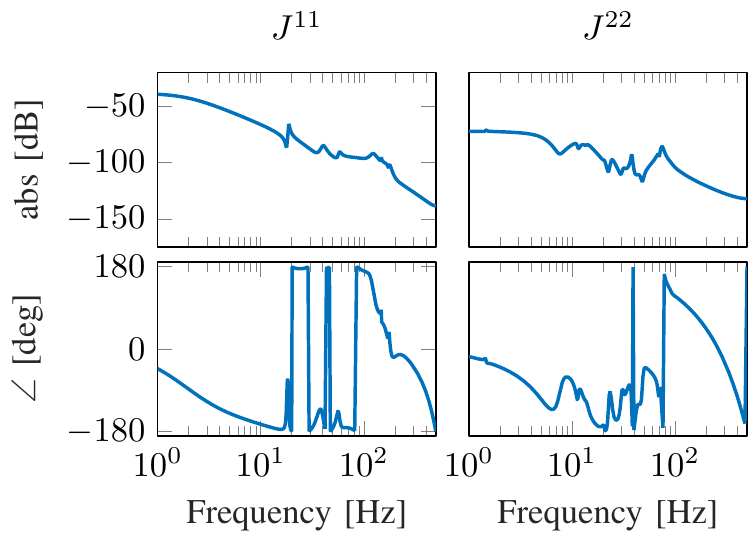}%
	\caption{Bode diagram of the process sensitivity of a decoupled industrial flatbed printer. Input and output 1 correspond to the y-axis, input and output 2 correspond to the x-axis.  \label{fig:bode}}%
	\vspace{-15pt}
\end{figure}
In this section, the proposed cross-coupled ILC framework is illustrated using simulations of an industrial flatbed printer, the process sensitivity of which is shown in Fig. \ref{fig:bode}. Cross-coupled ILC is applied to this system with coupling matrices $C_{e,j} = C_{f,j}$ and coupling gains representing the exact contour and tangential errors according to (\ref{eq:exact_cc}). The time- and iteration-invariant weights are given by
\begin{align} \nonumber
	W^k_{e,j} &= \begin{bmatrix} 1.5 & 0 \\ 0 & 0.5 \end{bmatrix}, \: W^k_{f,j} = \begin{bmatrix} 10^{-10} & 0 \\ 0 & 10^{-9} \end{bmatrix}, \\ \nonumber 
	W^k_{\Delta f,j} &= 10^{-10} I^{2\times 2},  \qquad \forall k,j.
\end{align}
Thus, the weight on the contour error $\varepsilon_c$ is higher than that on the tangential error $\varepsilon_t$, and the input penalty in the direction of the contour error is smaller, reflecting the aim of minimizing the contour error in this application.

The ILC algorithm with iteration-varying cross-coupling matrices for exact contour errors converges, see Fig. \ref{fig:contours} and \ref{fig:convergence}. In this case, the main contour error reduction comes from reducing the error in $x$-direction, see also the individual axes errors in Fig. \ref{fig:convergence}. Comparisons with other norm-optimal ILC cost functions are omitted, because different weights lead to different trade-offs between control input and error, making such comparisons completely arbitrary. The advantage of the proposed framework is in the use of exact contour errors, which ensures that the cost function represents the aim of contour tracking, and in the intuitiveness of tuning.

\begin{figure}
	\centering
	\setlength\figurewidth{.8\defcolwidth}
	\setlength\figureheight{.65\figurewidth}
	\includegraphics{./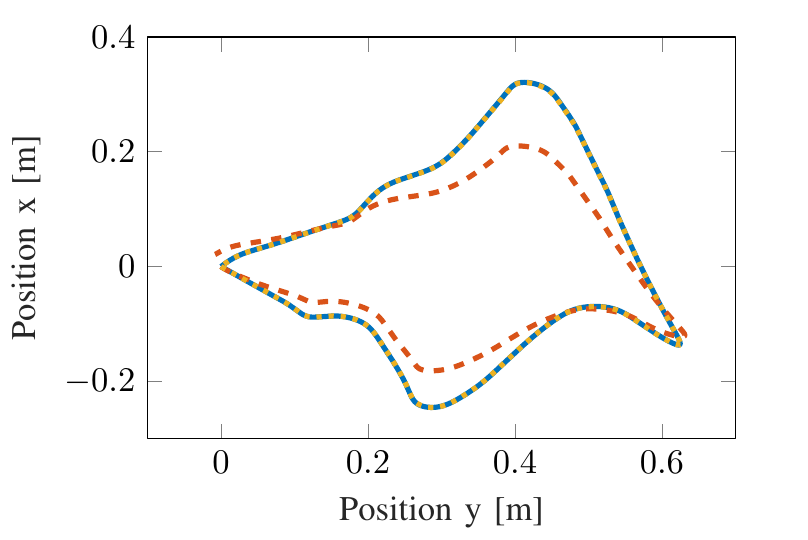}%
	\vspace{-5pt}
	\caption{Reference (\protect\blueline), position output without feedforward (\protect\reddash) and position output with a significantly reduced contour error after ten cross-coupled ILC iterations (\protect\yeldots). \label{fig:contours}}%
	\vspace{-10pt}
\end{figure}

\begin{figure}
	\setlength\figurewidth{.8\defcolwidth}
	\setlength\figureheight{.45\figurewidth}
	\centering
	\includegraphics{./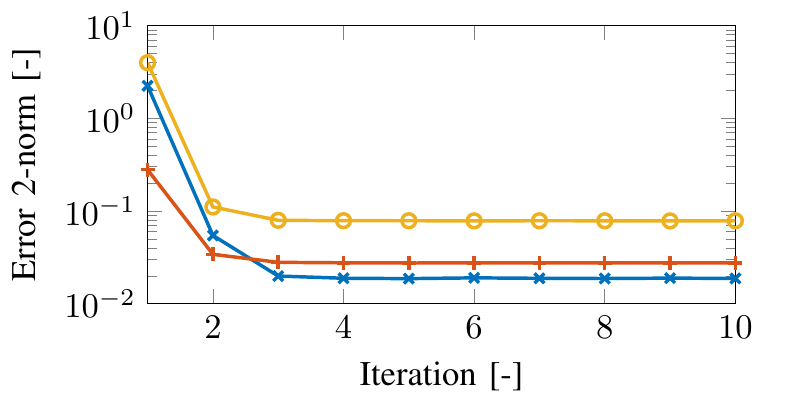}%
	\vspace{-5pt}
	\caption{Convergence of the 2-norm of the contour error (\protect\blueline) and the errors of the individual axes $y$ (\protect\redline) and $x$ (\protect\yelline). \label{fig:convergence}}%
	\vspace{-15pt}
\end{figure}

\section{Conclusion} \label{sec:conclusions}

In this paper a new framework for cross-coupled iterative learning control is developed that enables the use of exact contour errors and time- and iteration-varying weights. Conditions for the monotonic convergence of the ILC algorithm are given. In addition, the ILC update law is reframed as a linear quadratic tracking problem,  which can be solved efficiently for arbitrary long reference signals. The approach is illustrated on a simulation example of a flatbed printer; experimental results that confirm the theoretical and simulation results are omitted due to space limitations. Future work includes developing design criteria for time-varying weights.

\section*{Acknowledgment}
The authors gratefully acknowledge the contributions to this paper through a challenge-based learning project by Dirk Alferink, Gijs van den Brandt, Emre Deniz, Robert Devillers, Mike van Duijnhoven, Roel Habraken, Daan den Hartog, Shaun Boyteen Joseph, Jord van Kalmthout, Boudewijn Kempers, Sjoerd Leemrijse, Walter MacAulay, Paul Munns, Aron Prinsen, Stan de Rijk, Sander Ruijters, Jos Snijders, Chuck Steijlen, Jaap van der Stoel, Matthijs Teurlings, Hugo Thelosen, Peter Visser, Naomi de Vos and Matthijs van de Vosse. The authors also wish to thank Sjirk Koekebakker for his contributions.


\addtolength{\textheight}{-12cm}   



\bibliography{IEEEabrv,library}

\begin{thebibliography}{10}
\providecommand{\url}[1]{#1}
\csname url@rmstyle\endcsname
\providecommand{\newblock}{\relax}
\providecommand{\bibinfo}[2]{#2}
\providecommand\BIBentrySTDinterwordspacing{\spaceskip=0pt\relax}
\providecommand\BIBentryALTinterwordstretchfactor{4}
\providecommand\BIBentryALTinterwordspacing{\spaceskip=\fontdimen2\font plus
\BIBentryALTinterwordstretchfactor\fontdimen3\font minus
  \fontdimen4\font\relax}
\providecommand\BIBforeignlanguage[2]{{%
\expandafter\ifx\csname l@#1\endcsname\relax
\typeout{** WARNING: IEEEtran.bst: No hyphenation pattern has been}%
\typeout{** loaded for the language `#1'. Using the pattern for}%
\typeout{** the default language instead.}%
\else
\language=\csname l@#1\endcsname
\fi
#2}}

\bibitem{Jia2018}
Z.-y. Jia, J.-w. Ma, D.-n. Song, F.-j. Wang, and W.~Liu, ``{A review of
  contouring-error reduction method in multi-axis CNC machining},'' \emph{Int.
  J. Mach. Tools Manuf.}, vol. 125, pp. 34--54, 2018.

\bibitem{Yang2015}
S.~Yang, A.~H. Ghasemi, X.~Lu, and C.~E. Okwudire, ``{Pre-compensation of servo
  contour errors using a model predictive control framework},'' \emph{Int. J.
  Mach. Tools Manuf.}, vol.~98, pp. 50--60, 2015.

\bibitem{Koren1980}
Y.~Koren, ``{Cross-coupled biaxial computer control for manufacturing
  systems},'' \emph{J. Dyn. Syst. Meas. Control}, vol. 102, pp. 265--272, 1980.

\bibitem{Huo2012a}
F.~Huo and A.~N. Poo, ``{Improving contouring accuracy by using generalized
  cross-coupled control},'' \emph{Int. J. Mach. Tools Manuf.}, vol.~63, pp.
  49--57, 2012.

\bibitem{Shih2002}
Y.~T. Shih, C.~S. Chen, and A.~C. Lee, ``{A novel cross-coupling control design
  for bi-axis motion},'' \emph{Int. J. Mach. Tools Manuf.}, vol.~42, no.~14,
  pp. 1539--1548, 2002.

\bibitem{Bristow2006a}
D.~A. Bristow, M.~Tharayil, and A.~G. Alleyne, ``{A survey of iterative
  learning control},'' \emph{IEEE Control Syst.}, vol.~26, no.~3, pp. 96--114,
  2006.

\bibitem{Blanken2020a}
L.~Blanken and T.~Oomen, ``{Multivariable iterative learning control design
  procedures: From decentralized to centralized, illustrated on an industrial
  printer},'' \emph{IEEE Trans. Control Syst. Technol.}, vol.~28, no.~4, pp.
  1534--1541, 2020.

\bibitem{Gunnarsson2001}
S.~Gunnarsson and M.~Norrl{\"{o}}f, ``{On the design of ILC algorithms using
  optimization},'' \emph{Automatica}, vol.~37, no.~12, pp. 2011--2016, 2001.

\bibitem{Owens2016}
D.~H. Owens, \emph{{Iterative Learning Control: An Optimization
  Paradigm}}.\hskip 1em plus 0.5em minus 0.4em\relax London: Springer-Verlag,
  2016.

\bibitem{Chen2016}
H.~R. Chen, M.~Y. Cheng, C.~H. Wu, and K.~H. Su, ``{Real time parameter based
  contour error estimation algorithms for free form contour following},''
  \emph{Int. J. Mach. Tools Manuf.}, vol. 102, 2016.

\bibitem{Barton2008a}
K.~Barton and A.~Alleyne, ``{A cross-coupled iterative learning control design
  for precision motion control},'' \emph{IEEE Trans. Control Syst. Technol.},
  vol.~16, no.~6, pp. 1218--1231, 2008.

\bibitem{Barton2011}
K.~L. Barton and A.~G. Alleyne, ``{A norm optimal approach to time-varying ILC
  with application to a multi-axis robotic testbed},'' \emph{IEEE Trans.
  Control Syst. Technol.}, vol.~19, no.~1, pp. 166--180, 2011.

\bibitem{Zundert2016}
J.~van Zundert, J.~Bolder, S.~Koekebakker, and T.~Oomen, ``{Resource-efficient
  ILC for LTI/LTV systems through LQ tracking and stable inversion: Enabling
  large feedforward tasks on a position-dependent printer},''
  \emph{Mechatronics}, vol.~38, pp. 76--90, 2016.

\bibitem{Sun2014a}
H.~Sun and A.~G. Alleyne, ``{A cross-coupled non-lifted norm optimal iterative
  learning control approach with application to a multi-axis robotic
  testbed},'' in \emph{19th IFAC World Congr.}, 2014, pp. 2046--2051.

\bibitem{Kreyszig1978}
E.~Kreyszig, \emph{{Introductory Functional Analysis with Applications}}.\hskip
  1em plus 0.5em minus 0.4em\relax John Wiley {\&} Sons, Inc., 1978.

\bibitem{Strijbosch2022}
N.~Strijbosch and T.~Oomen, ``{Iterative learning control for intermittently
  sampled data: Monotonic convergence, design, and applications},''
  \emph{Automatica}, vol. 139, p. 110171, 2022.

\bibitem{Strijbosch2019g}
------, ``{Beyond quantization in iterative learning control: Exploiting
  time-varying time-stamps},'' in \emph{Proc. Am. Control Conf.}, 2019, pp.
  2984--2989.

\bibitem{Lewis2012}
F.~L. Lewis, D.~Vrabie, and K.~G. Vamvoudakis, ``{Reinforcement learning and
  feedback control: Using natural decision methods to design optimal adaptive
  controllers},'' \emph{IEEE Control Syst. Mag.}, vol.~32, no.~6, pp. 76--105,
  2012.

\bibitem{Ebrahimzadeh2017}
F.~Ebrahimzadeh, J.~S.~H. Tsai, M.~C. Chung, Y.~T. Liao, S.~M. Guo, L.~S.
  Shieh, and L.~Wang, ``{A generalised optimal linear quadratic tracker with
  universal applications. Part 2: discrete-time systems},'' \emph{Int. J. Syst.
  Sci.}, vol.~48, no.~2, pp. 397--416, 2017.

\bibitem{Oomen2017}
T.~Oomen and C.~R. Rojas, ``{Sparse iterative learning control with application
  to a wafer stage: Achieving performance, resource efficiency, and task
  flexibility},'' \emph{Mechatronics}, vol.~47, pp. 134--147, 2017.

\bibitem{Oomen2011}
T.~Oomen, J.~van~de Wijdeven, and O.~H. Bosgra, ``{System identification and
  low-order optimal control of intersample behavior in ILC},'' \emph{IEEE
  Trans. Automat. Contr.}, vol.~56, no.~11, pp. 2734--2739, 2011.

\end{thebibliography}

\end{document}